\newtheorem{theorem}{Theorem}
\newtheorem{definition}{Definition}
\numberwithin{equation}{section}
\begin{document}
\title{More new classes of permutation trinomials over $\mathbb{F}_{2^n}$ }
\author{Yanping Wang
\\ISN Laboratory, Xidian University, Xi'an 710071, China\\
e-mail: ypwang@aliyun.com
\and WeiGuo Zhang
\\ISN Laboratory, Xidian University, Xi'an 710071, China\\
e-mail: zwg@xidian.edu.cn
\and Zhengbang Zha \\
School of Mathematical Sciences, Luoyang Normal University, Luoyang 471934, China\\
e-mail: zhazhengbang@163.com
}
\date{}

\maketitle

\begin{abstract}
Permutation polynomials over finite fields have wide applications in many areas of science and engineering. In this paper, we present six new classes of permutation trinomials over $\mathbb{F}_{2^n}$ which have explicit forms by determining the solutions of some equations.
\end{abstract}

\textbf{Keywords:} Finite field, permutation polynomial, trinomial, multivariate.

\section{Introduction} Let $q$ be a power of prime $p$ and $\mathbb{F}_q$ denote the finite field with $q$ elements. Define $\mathbb{F}^{*}_q$ be the multiplication group of $\mathbb{F}_q$. A polynomial $f(x)\in \mathbb{F}_q[x]$ is called a \emph{permutation polynomial} over $\mathbb{F}_q$ if the associated polynomial function $f:c \rightarrow f(c)$ from $\mathbb{F}_q$ into $\mathbb{F}_q$ is a permutation of $\mathbb{F}_q$ \cite{LN97}. Permutation polynomials over finite fields have wide applications in combinational designs \cite{LN97}, cryptography\cite{LM1984,G1993}, and coding theory \cite{Laigle-Chapuy2007}. 
Finding new permutation polynomials is of great interest in both theoretical and applied aspects.
There are numerous interesting results about the study of permutation polynomials (see \cite{HX2005,DXY2009,ZH2012,Wang2013,PGV2014,TZH2014,ZHC2015,ZTT2015,ZYP2016,ZC2016}). 
The reader may refer to \cite[Chapter $7$]{LN97} and \cite[Chapter $8$]{GD2013} for detailed knowledge on permutation polynomials.

Permutation polynomials with few terms over finite fields are in particularly interesting for their simple algebraic forms and important applications in the areas of mathematics and engineering. Recently, the studying of permutation trinomials is a hot topic.
Some excellent works were done and numerous beautiful permutation trinomials were discovered.
For instance, Dobbertin proposed one permutation trinomial $$x^{2^{k+1}+1} + x^{3} + x$$ over $\mathbb{F}_{2^{2k+1}}$, which is applied to show that the Welch power function $x^{2^{k}+3}$ is almost perfect nonlinear(APN) \cite{dobbertin1999}. Afterwards, Dobbertin developed the multivariate method \cite{DOBBERTIN2002} to confirm the permutation property of certain types of polynomials over $\mathbb{F}_{2^{n}}$.
Hou \cite{Hou2015} determined the permutation behaviors of trinomials with the form $$ ax + bx^{q}+ x^{2q-1} $$ 
over $\mathbb{F}_{q^{2}}$ for some explicit conditions on $a$ and $b$. Moreover, Hou did a survey of permutation binomials and trinomials \cite{houxd2015}.
Lee et al. \cite{JY1997} presented some permutation trinomials of the form $x^{r}f(x^{s})$ over finite fields. Ding et al. \cite{CLQ2015} gave more classes of permutation trinomials with nonzero coefficients $1$ over finite fields with even characteristic. Motivated by \cite{CLQ2015}, Li et al. \cite{LQC2017,LQCL2017} also proposed several classes of permutation trinomials over $\mathbb{F}_{2^n}$. Bhattacharya et al. \cite{SS2016} characterized permutation trinomials of the form $$x^{2^{s}+1} + x^{2^{s-1}+1} +\alpha x$$ over $\mathbb{F}_{2^n}$ for some $\alpha\in\mathbb{F}_{2^n}$. After that, three classes of permutation trinomials were proposed by Ma et al. \cite{JTTG2016}. Gupta et al. \cite{RR2016} built four new classes of permutation trinomials of the form $x^{r}h(x^{2^{m}-1})$ over $\mathbb{F}_{2^{2m}}$. Followed the work of \cite{RR2016}, Zha et al. \cite{ZHF2017} further presented six classes of permutation trinomials of the form $x^{r}h(x^{2^{m}-1})$ over $\mathbb{F}_{2^{2m}}$.  Li et al. \cite{LT2016} constructed four new classes of permutation trinomials over $\mathbb{F}_{2^{2m}}$ from Niho exponents of the form $$x^{s(2^{m}-1)+1} + x^{t(2^{m}-1)+1} + x.$$  M.E. Zieve \cite{M2013} constructed some classes of permutation trinomials over $\mathbb{F}_{q^{2}}$ by exhibiting classes of low-degree rational functions over $\mathbb{F}_{q^{2}}$ which induce bijections on the set of $(q + 1)$-th roots of unity. Based on the paper of M.E. Zieve \cite{M2013}, Wu et al. \cite{DYDM2016} gave the explicit form of all permutation trinomials over $\mathbb{F}_{2^{m}}$, and presented some classes permutation trinomials of the form $x^{r}f(x^{s})$ over $\mathbb{F}_{2^{2m}}$.

In this paper, six new classes of permutation trinomials are proposed by determining the solutions of some equations.
The paper is organized as follows. 
In Section \ref{two}, we present five new classes of permutation trinomials over $\mathbb{F}_{2^n}$ by applying the multivariate method.
In Section \ref{three}, we obtain a new class of permutation trinomials over $\mathbb{F}_{2^n}$ for the case of $n\equiv 0({\rm mod}\ 4)$. Section \ref{conclusion} is the conclusion of this paper.

\section{Five new classes of permutation trinomials}\label{two}
Inspired by the idea of \cite{DOBBERTIN2002}, we present five new classes of permutation trinomials in this section by using the multivariate method.

\subsection{The case of $n\equiv 0({\rm mod}\ 3)$}
In this subsection, we propose two classes of permutation trinomials over $\mathbb{F}_{2^{n}}$ when $n\equiv 0({\rm mod}\ 3)$.
First we give the definition of a trace function needed later.

\begin{definition}\label{defi:1} Let $n$ and $k$ be positive integers with $k\mid n$. The trace function ${\rm Tr}^{n}_{k}(x)$ from $\mathbb{F}_{p^n}$ to $\mathbb{F}_{p^k}$ is defined by
$${\rm Tr}^{n}_{k}(x)= x+ x^{p^{k}}+ x^{p^{2k}}+ \cdots + x^{p^{(\frac{n}{k}-1)k}}.$$
If $k=1$, then ${\rm Tr}^{n}_{1}(x)$ is called the absolute trace function.
\end{definition}

\begin{theorem}\label{th:K2}
Let $k\not\equiv 2~({\rm mod}\ 3)$ and $n=3k$. Then
\begin{eqnarray*}
f(x)=x^{2^{2k}+2^{k}-1} + x^{2^{2k}} +x
\end{eqnarray*}
is a permutation polynomial on $\mathbb{F}_{2^n}$.
\end{theorem}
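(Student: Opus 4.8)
The plan is to set $q=2^{k}$ so that $\mathbb{F}_{2^{n}}=\mathbb{F}_{q^{3}}$ and $f(x)=x^{q^{2}+q-1}+x^{q^{2}}+x$, and to prove injectivity of $f$ on the finite set $\mathbb{F}_{q^{3}}$ by the multivariate method. Since $f(0)=0$ and $f$ restricts to the identity on $\mathbb{F}_{q}$ (for $x\in\mathbb{F}_{q}$ one has $x^{q}=x$, hence $x^{q^{2}+q-1}=x^{q^{2}}=x$ and $f(x)=3x=x$ in characteristic $2$), the content is to show that for every $b\in\mathbb{F}_{q^{3}}$ the equation $f(x)=b$ has at most one nonzero solution. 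Writing $u=x$, $v=x^{q}$, $w=x^{q^{2}}$ and clearing the negative exponent (multiply $f(x)=b$ by $x$), the equation becomes $u^{2}+uw+vw=bu$; applying the Frobenius twice yields the companion equations $v^{2}+uv+uw=b^{q}v$ and $w^{2}+vw+uv=b^{q^{2}}w$. This is the system I would solve for $u$.

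The first step is to extract the symmetric data. Put $s_{1}=u+v+w={\rm Tr}^{n}_{k}(x)$, $s_{2}=uv+vw+wu$, and $s_{3}=uvw=N(x)$ with $N(x)=x^{1+q+q^{2}}$, all lying in $\mathbb{F}_{q}$. Summing the three equations kills the mixed products in characteristic $2$ and gives ${\rm Tr}^{n}_{k}(bx)=s_{1}^{2}$, while summing the uncleared identity $b=vw/u+w+u$ with its two conjugates gives ${\rm Tr}^{n}_{k}(b)=s_{2}^{2}/s_{3}$, i.e. $s_{2}^{2}=s_{3}\,{\rm Tr}^{n}_{k}(b)$. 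Taking further weighted combinations, together with the norm constraint $s_{3}\in\mathbb{F}_{q}$, is designed to determine $s_{1},s_{2},s_{3}$ as explicit functions of $b$, so that $\{u,v,w\}$ is pinned down as the root set of a cubic $T^{3}+s_{1}T^{2}+s_{2}T+s_{3}$ over $\mathbb{F}_{q}$.

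The crux is that this cubic alone is not enough, because $f$ is generally not constant on Frobenius orbits and I must single out which conjugate root equals $x$. This is where the asymmetry of the three equations is used: adding the first and third gives the cyclic relation $u(s_{1}+b)=w(w+b^{q^{2}})$, and together with its two cyclic images this lets me eliminate $v,w$ and arrive at a single equation for $u$ over $\mathbb{F}_{q^{3}}$. The main obstacle is proving that this final equation has exactly one admissible root, and this is precisely where the hypothesis $k\not\equiv 2\,({\rm mod}\ 3)$ must enter. I expect it to manifest as a coprimality (or cube) condition tied to the factor $7\mid q^{2}+q+1$, which holds exactly when $3\nmid k$: the $7$-th roots of unity then sit inside the norm-one subgroup, and I anticipate they produce spurious collisions precisely in the excluded case $k\equiv 2$.

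As an independent check on the mechanism I would also exploit the factorization $f(x)=x\,h(x^{q-1})$ with $h(y)=y^{q+2}+y^{q+1}+1$: since $\gcd(1,q-1)=1$, the standard criterion for $x^{r}h(x^{s})$ reduces the claim to showing that $g(t)=t\,h(t)^{q-1}$ permutes the norm-one subgroup $\mu_{q^{2}+q+1}$. One checks $h$ has no zero on $\mu_{q^{2}+q+1}$ (otherwise $t^{q}=t+1$ forces $t\in\mathbb{F}_{q}$, whence $h(t)=1$), and the identity $h(t)=t^{q+1}(t^{q}+t+1)^{q^{2}}$, valid on $\mu_{q^{2}+q+1}$, collapses $g$ to an explicit low-degree map whose bijectivity is governed by the same congruence $k\not\equiv 2\,({\rm mod}\ 3)$. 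Either route reduces the theorem to a single clean count, and that count is the step I expect to be delicate.
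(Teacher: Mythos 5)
Your setup is sound and close in spirit to the paper's: the system $u^{2}+uw+vw=bu$, $v^{2}+uv+uw=b^{q}v$, $w^{2}+vw+uv=b^{q^{2}}w$ is the same multivariate system the paper works with (the paper then substitutes the ``dual'' variables $yz/x$, $xz/y$, $xy/z$, which is what makes its elimination clean), the identities $s_{1}^{2}={\rm Tr}^{n}_{k}(bx)$ and $s_{2}^{2}=s_{3}{\rm Tr}^{n}_{k}(b)$ are correct, and the alternative route through $f(x)=x\,h(x^{q-1})$ with the criterion on $\mu_{q^{2}+q+1}$ (including the identity $h(t)=t^{q+1}(t^{q}+t+1)^{q^{2}}$) is valid. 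But the proposal stops exactly where the proof has to happen: in both routes the decisive step --- showing the final one-variable equation has a unique admissible root --- is announced rather than carried out. Note also that $s_{1}^{2}={\rm Tr}^{n}_{k}(bx)$ involves $x$, so your hope of determining $s_{1},s_{2},s_{3}$ ``as explicit functions of $b$'' by further symmetric combinations is not obviously realizable; the paper sidesteps this by reducing to a pair of equations in a single unknown $v$ and studying the difference $\lambda=v_{1}+v_{2}$ of two putative solutions, which is forced to satisfy $\lambda^{2^{2k+1}-1}+\lambda+1=0$ and $\lambda^{3}+\lambda+1=0$ simultaneously, hence $\lambda^{2^{2k+1}-4}=1$.

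More importantly, your predicted mechanism for the hypothesis is off. You expect the condition to surface as ``$7\mid q^{2}+q+1$, which holds exactly when $3\nmid k$''; but that criterion is satisfied for both $k\equiv 1$ and $k\equiv 2\ ({\rm mod}\ 3)$, whereas the theorem includes $k\equiv 1$ and excludes only $k\equiv 2$. The actual source is $\gcd(2^{2k+1}-4,2^{3k}-1)=2^{\gcd(2k-1,3k)}-1=2^{\gcd(2k-1,3)}-1$, which equals $1$ precisely when $3\nmid 2k-1$, i.e.\ $k\not\equiv 2\ ({\rm mod}\ 3)$ (equivalently: whether $7=2^{3}-1$ divides $2^{2k-1}-1$, not whether it divides $q^{2}+q+1$). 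Without the correct gcd you cannot close the uniqueness argument. Two smaller points: reducing the theorem to ``at most one nonzero solution for every $b$'' is not enough --- you must also rule out a nonzero root of $f(x)=0$, which the paper does separately via ${\rm Tr}^{n}_{k}(1)=1$ applied to $y^{2^{k}}+y+1=0$ with $y=x^{2^{k}-1}$; and your parenthetical ``$t\in\mathbb{F}_{q}$ whence $h(t)=1$'' in the $\mu_{q^{2}+q+1}$ route is garbled (from $t^{q}=t+1$ and $t\in\mathbb{F}_{q}$ one gets $0=1$ directly).
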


\begin{proof} We shall show that for each $a \in\mathbb{F}_{2^n}$, the equation
\begin{eqnarray}\label{eq:K2.1}
f(x)=x^{2^{2k}+2^{k}-1} + x^{2^{2k}} +x =a
\end{eqnarray}
has one unique solution in $\mathbb{F}_{2^n}$.

When $a=0$, we need to prove that $f(x)=0$ has the unique solution $x=0$. Assume $x\neq0$ is a solution of the equation
\begin{eqnarray}\label{eq:K2.2}
x^{2^{2k}+2^{k}-1} + x^{2^{2k}} + x =0.
\end{eqnarray}
Raising both sides of (\ref{eq:K2.2}) to the $2^{k}$-th power, we have
\begin{eqnarray*}
x^{2^{2k}-2^{k}+1} + x^{2^{k}} + x =0,
\end{eqnarray*}
which means that
\begin{eqnarray}\label{eq:K2.3}
x^{2^{2k}-2^{k}} + x^{2^{k}-1} + 1 =0.
\end{eqnarray}
Setting $y=x^{2^{k}-1}$, equation (\ref{eq:K2.3}) can be written as
\begin{eqnarray}\label{eq:K2.3'}
y^{2^{k}} + y + 1 =0.
\end{eqnarray}
Applying the mapping ${\rm Tr}^{n}_{k}(\cdot)$ on both sides of equation (\ref{eq:K2.3'}), we get ${\rm Tr}^{n}_{k}(1)=0$, which contradicts with ${\rm Tr}^{n}_{k}(1)=1$. Hence, equation (\ref{eq:K2.3}) has no solution.

When $a \neq 0$, we will show that equation (\ref{eq:K2.1}) has one non-zero solution.
Let $y=x^{2^{k}}$, $z=y^{2^{k}}$, $b=a^{2^{k}}$ and $c=b^{2^{k}}$. Then we can obtain the system of equations
\begin{equation}\label{eq:K2.4}
\left\{
\begin{array}{l}
      x + z +  \frac{yz}{x} =a,  \\
     y + x +  \frac{xz}{y} =b,   \\
     z + y +  \frac{xy}{z} =c.
\end{array}
\right.
\end{equation}
Let $u=\frac{yz}{x}$, $v=\frac{xz}{y}$, and $w=\frac{xy}{z}$.  Then $x^{2}=vw$, $y^{2}=uw$, and $z^{2}=uv$. We have
\begin{equation}\label{eq:K2.5}
\left\{
    \begin{array}{lll}
      u^{2} + vw +  uv =a^{2},  \\
     v^{2} + vw +  uw =b^{2},   \\
     w^{2} + uw +  uv =c^{2}.
    \end{array}
  \right.
\end{equation}
By (\ref{eq:K2.5}), we deduce that $u^{2}+v^{2}+w^{2}=a^{2}+b^{2}+c^{2}$, which implies $u +v +w =a +b +c$. Setting $\epsilon=a +b +c$,  we have $\epsilon\in\mathbb{F}_{2^k}$ and $w=\epsilon +u +v$.

If $\epsilon=0$, then pluging $u +v +w =0$ into (\ref{eq:K2.5}), we have $u =b$, $v =c$, $w =a$. So the equation has a unique solution $x=(vw)^{\frac{1}{2}}=(ac)^{\frac{1}{2}}$.

If $\epsilon \neq0$, then from (\ref{eq:K2.5}), we can get
\begin{equation}\label{eq:K2.6}
\left\{
    \begin{array}{ll}
      u^{2} + v^{2} +  \epsilon v = a^{2},  \\
     u^{2} + \epsilon u +  \epsilon v = b^{2}.
    \end{array}
  \right.
\end{equation}
It can be verified that $u=v^{2^{2k}}$ and
\begin{eqnarray}\label{eq:K2.7}
v^{2^{2k+1}} + v^2 +  \epsilon v = a^{2}.
\end{eqnarray}
Eliminating the indeterminate $u$ in (\ref{eq:K2.6}), we obtain
\begin{eqnarray}\label{eq:K2.8}
v^{4} + \epsilon^{2}v^{2} + \epsilon^{3}v + a^{4}+b^{4}+a^{2}\epsilon^{2} =0.
\end{eqnarray}
Since $\epsilon\in\mathbb{F}_{2^k}$ and $\epsilon\neq0$, substituting $v$ with $\epsilon v$ in (\ref{eq:K2.7}) and (\ref{eq:K2.8}), we obtain
\begin{equation}\label{eq:K2.9}
\left\{
    \begin{array}{ll}
     v^{2^{2k+1}} +  v^2 +  v = a^{2}/\epsilon^2,  \\
     v^{4} + v^{2} + v + \frac{1}{\epsilon^{4}}(a^{4}+b^{4}+a^{2}\epsilon^{2}) =0.
    \end{array}
  \right.
\end{equation}
Suppose $v_1$ and $v_2$ are the solutions of (\ref{eq:K2.9}) and let $\lambda=v_1+v_2$. Then we get $\lambda=0$ or
\begin{equation}\label{eq:K2.10}
\left\{
    \begin{array}{ll}
     \lambda^{2^{2k+1}-1} + \lambda +  1 = 0,  \\
     \lambda^{3} + \lambda+1 =0,
    \end{array}
  \right.
\end{equation}
which means that $\lambda^{2^{2k+1}-4}=1$. Since $k\not\equiv 2 ~({\rm mod}\ 3)$, we can deduce that
$$\gcd(2^{2k+1}-4,2^{3k}-1)=2^{\gcd(2k-1,3k)}-1=1$$
and $\lambda=1$. Note that $\lambda=1$ is not the solution of (\ref{eq:K2.10}). Therefore, $\lambda=0$ and there is
a unique solution $v$ of (\ref{eq:K2.9}). Hence, (\ref{eq:K2.4}) has a unique solution when $k\not\equiv 2({\rm mod}\ 3)$.
\end{proof}

\begin{theorem}\label{th:N2}
Let $k\not\equiv 2~({\rm mod}\ 3)$ and $n=3k$. Then
\begin{eqnarray*}
f(x)=x^{2^{2k}+2^{k}-1} + x^{2^{2k}-2^{k}+1} +x
\end{eqnarray*}
is a permutation polynomial on $\mathbb{F}_{2^n}$.
\end{theorem}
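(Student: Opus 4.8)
The plan is to mirror the multivariate argument used for Theorem \ref{th:K2}, showing that for every $a\in\mathbb{F}_{2^n}$ the equation $f(x)=a$ has exactly one solution. Writing $y=x^{2^k}$ and $z=x^{2^{2k}}$, the two non-linear terms are $x^{2^{2k}+2^k-1}=\frac{yz}{x}$ and $x^{2^{2k}-2^k+1}=\frac{xz}{y}$, so with $b=a^{2^k}$, $c=a^{2^{2k}}$ the Frobenius $x\mapsto x^{2^k}$ turns $f(x)=a$ into a system analogous to (\ref{eq:K2.4}). As in that proof I introduce $u=\frac{yz}{x}$, $v=\frac{xz}{y}$, $w=\frac{xy}{z}$, which satisfy $x^2=vw$, $y^2=uw$, $z^2=uv$ and the Frobenius cycle $u^{2^k}=v$, $v^{2^k}=w$, $w^{2^k}=u$; here the linear system reads $x+u+v=a$, $y+v+w=b$, $z+w+u=c$.

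For $a=0$ I must show $x=0$ is the only solution. Assume $x\neq0$; squaring the first equation and substituting $x^2=vw$ gives $u^2+v^2+vw=0$. Setting $t=u^{2^k-1}$ (so that $t^{1+2^k+2^{2k}}=1$) and dividing by $u^2$ yields $t^{2^k}=\frac{1+t^2}{t^2}$. Iterating the Frobenius gives $t^{2^{2k}}=\frac{1}{1+t^4}$, and feeding these into $t\cdot t^{2^k}\cdot t^{2^{2k}}=1$ forces $t^5+t^2+t+1=(t+1)^2(t^3+t+1)=0$. Since $t=1$ is incompatible with $t^{2^k}=\frac{1+t^2}{t^2}$, we get $t^3+t+1=0$, i.e.\ $t\in\mathbb{F}_{2^3}\setminus\mathbb{F}_2$. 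On $\mathbb{F}_{2^3}$ one has $t^{2^k}\in\{t,t^2,t^4\}$ according as $k\equiv0,1,2\pmod{3}$, and matching this against $t^{2^k}=\frac{1+t^2}{t^2}$ yields a contradiction unless $k\equiv2\pmod{3}$; as that case is excluded, no nonzero solution exists.

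For $a\neq0$ I square the whole system to obtain the analogue of (\ref{eq:K2.5}), namely $u^2+v^2+vw=a^2$, $v^2+w^2+uw=b^2$, $w^2+u^2+uv=c^2$. Adding the three equations now gives $uv+vw+wu=a^2+b^2+c^2=\epsilon^2$ with $\epsilon=a+b+c\in\mathbb{F}_{2^k}$, and I would split into the cases $\epsilon=0$ and $\epsilon\neq0$. In each case the goal is to eliminate down to a single variable (paralleling (\ref{eq:K2.6})--(\ref{eq:K2.9}), normalising by $\epsilon$ when $\epsilon\neq0$) and then to prove uniqueness: for two putative solutions $v_1,v_2$ set $\lambda=v_1+v_2$ and derive an over-determined system like (\ref{eq:K2.10}). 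I expect this to reduce to $\lambda^{2^{2k+1}-4}=1$, whence $\gcd(2^{2k+1}-4,2^{3k}-1)=2^{\gcd(2k-1,3k)}-1=1$ (valid since $k\not\equiv2\pmod{3}$) gives $\lambda=1$; as $\lambda=1$ fails the accompanying low-degree relation, $\lambda=0$ and the solution is unique.

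The main obstacle is exactly this elimination in the case $a\neq0$. In Theorem \ref{th:K2} the symmetric invariant is the \emph{linear} relation $u+v+w=\epsilon$, which permits the clean substitution $w=\epsilon+u+v$ leading directly to (\ref{eq:K2.6}). Here the invariant $uv+vw+wu=\epsilon^2$ is \emph{quadratic}, so reducing the three quadratic equations together with the Frobenius cycle to a tractable pair in one variable is more delicate, and the degenerate subcase $\epsilon=0$ must be treated separately. By contrast, the $a=0$ analysis above and the concluding divisibility step go through cleanly by analogy with Theorem \ref{th:K2}, the hypothesis $k\not\equiv2\pmod{3}$ entering in both places through $\gcd(2k-1,3k)=1$.
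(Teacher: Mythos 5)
There is a genuine gap: your treatment of the main case $a\neq0$ is only a set-up, not a proof. You square the system, obtain the quadratic invariant $uv+vw+wu=\epsilon^2$, and then explicitly defer the elimination (``I expect this to reduce to $\lambda^{2^{2k+1}-4}=1$\dots The main obstacle is exactly this elimination''). That elimination is the heart of the theorem, and the analogy with Theorem~\ref{th:K2} that you are banking on does not go through as stated, precisely because, as you yourself observe, the invariant you extracted is quadratic rather than linear. The observation you are missing is that you should add the three \emph{unsquared} equations $u+v+x=a$, $v+w+y=b$, $w+u+z=c$: each of $u,v,w$ occurs in exactly two of them, so they cancel and you get the linear relation $x+y+z=a+b+c=\epsilon$ in the original variables. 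This is what the paper exploits: it substitutes $x=\epsilon+y+z$, sets $y=\lambda z$, derives from two of the equations a pair of relations each linear in $z$ (equations (\ref{eq:K4.14}) and (\ref{eq:K4.17})), and eliminates $z$ to obtain $\zeta_1\lambda+\zeta_2=0$ with $\zeta_1=ac+b^2+c^2$, $\zeta_2=ab+a^2+c^2$. After showing $\zeta_1\neq0$, this determines $\lambda=\zeta_1^{2^k-1}$ uniquely (so $\lambda^{2^{2k}+2^k+1}=1$), and the remaining work is a case analysis on $k\bmod 3$ to handle the possible degeneracy $\lambda^3+\lambda+1=0$ of the coefficient of $z$. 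None of this resembles the uniqueness step of Theorem~\ref{th:K2} (no relation of the form $\lambda^{2^{2k+1}-4}=1$ appears), so the conjectured endgame in your sketch is also not the one that materializes.

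For what it is worth, your $a=0$ analysis is correct and complete: the reduction to $t^{2^k}=(1+t^2)/t^2$ with $t=u^{2^k-1}$, the norm condition forcing $(t+1)^2(t^3+t+1)=0$, and the observation that $(1+t^2)/t^2=t^4$ on the roots of $t^3+t+1$ so that $t^{2^k}=t^4$ requires $k\equiv2\pmod 3$, all check out. This is a genuinely different (and arguably cleaner) route than the paper's, which works with $y=x^{2^k-1}$ and a $\gcd$ computation with $2^k+3$ and $7$. But an alternative proof of the easy case does not repair the missing main case.
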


\begin{proof} For any fixed $a \in\mathbb{F}_{2^n}$, we need to prove that the equation
\begin{eqnarray}\label{eq:K4.1}
f(x)=x^{2^{2k}+2^{k}-1} + x^{2^{2k}-2^{k}+1} +x =a
\end{eqnarray}
has one unique solution in $\mathbb{F}_{2^n}$. Obviously, $f(0)=0$. For $a=0$, we have $x=0$ or
\begin{eqnarray}\label{eq:K4.2}
x^{2^{2k}+2^{k}-2} + x^{2^{2k}-2^{k}} + 1=0.
\end{eqnarray}
Setting $y=x^{2^{k}-1}$, equation (\ref{eq:K4.2}) can be written as
\begin{eqnarray}\label{eq:K4.3}
y^{2^{k}+2} + y^{2^{k}}  + 1 =0.
\end{eqnarray}
Note that $y\neq0$ and $y^{2^{2k}+2^k+1}=1$. Equation (\ref{eq:K4.3}) leads to
$$y^{2^{k}+2} + y^{2^{k}}  + y^{2^{2k}+2^k+1} =0,$$
which implies that
$$y^2+1+y^{2^{2k}+1} =0.$$
Raising both sides of the above equation to the $2^{k-1}$-th power gives
\begin{eqnarray}\label{eq:K4.4}
y^{2^{k}}  + 1 + y^{2^{k-1}(2^{2k}+1)} =0.
\end{eqnarray}
From equations (\ref{eq:K4.3}) and (\ref{eq:K4.4}), we obtain $y^{2^{k}+2}=y^{2^{k-1}(2^{2k}+1)}$, which implies that $y^{2^k+3}=1$.
Since $k\not\equiv 2~({\rm mod}\ 3)$, it can be checked that
$$\gcd(2^{2k}+2^k+1,2^k+3)=\gcd(2^k+3,7)=1.$$
Then we get $y=1$. But $y=1$ is not the solution of (\ref{eq:K4.3}). Thus, $f(x)=0$ if and only if $x=0$.

When $a \in\mathbb{F}^{*}_{2^n}$.
Let $y=x^{2^{k}}$, $z=y^{2^{k}}$, $b=a^{2^{k}}$ and $c=b^{2^{k}}$. Then we obtain the following system of equations
\begin{numcases}{}
  \frac{yz}{x}+ \frac{xz}{y} + x=a,\label{eq:K4.8}\\
  \frac{xz}{y} + \frac{xy}{z} + y=b, \label{eq:K4.8'}\\
  \frac{xy}{z} + \frac{yz}{x} + z =c. \label{eq:K4.8"}
\end{numcases}
Adding the three equations above, we obtain $x+y+z=a+b+c$. 
Equations (\ref{eq:K4.8}) and (\ref{eq:K4.8'}) lead to
\begin{numcases}{}
      y^{2}z + x^{2}z +  x^{2}y = axy, \label{eq:K4.9}\\
     xz^{2} + xy^{2} +  y^{2}z = byz .  \label{eq:K4.9'}
\end{numcases}
Adding (\ref{eq:K4.9}) with (\ref{eq:K4.9'}) gives
\begin{equation}\label{eq:K4.10}
x(y+z)\epsilon + y(ax+bz) =0,
\end{equation}
where $\epsilon=a +b +c$ and $\epsilon\in\mathbb{F}_{2^k}$.
Plugging $x=\epsilon + y + z$ into (\ref{eq:K4.10}), we get
\begin{equation}\label{eq:K4.11}
(\epsilon+a)y^{2} + \epsilon z^{2} +(a+b)yz + \epsilon(\epsilon+a)y + \epsilon^{2}z =0.
\end{equation}
Setting $y=\lambda z$ with $\lambda\neq0$, and plugging it into (\ref{eq:K4.11}), we obtain
\begin{equation}\label{eq:K4.12}
(\epsilon+a)\lambda^{2}z^{2} + \epsilon z^{2} +(a+b)\lambda z^{2} + \epsilon(\epsilon+a)\lambda z + \epsilon^{2}z =0,
\end{equation}
which can be rewritten as
\begin{equation}\label{eq:K4.13}
[(\epsilon+a)\lambda ^{2} + (a+b)\lambda + \epsilon]z^{2} + [\epsilon(\epsilon+a)\lambda + \epsilon^{2}]z =0.
\end{equation}
Let $\xi_{1}=(\epsilon+a)\lambda^{2} + (a+b)\lambda + \epsilon$ and $\xi_{2}=\epsilon(\epsilon+a)\lambda + \epsilon^{2}$. Then we have
\begin{equation}\label{eq:K4.14}
\xi_{1} z + \xi_{2} =0.
\end{equation}
Multiplying equation (\ref{eq:K4.8'}) by $yz$ and then plugging $x=\epsilon + y + z$, it leads to
\begin{equation}\label{eq:K4.15}
y^{3}+ z^{3} +yz^{2}+ \epsilon(y^{2} + z^{2})+ byz =0.
\end{equation}
Plugging $y=\lambda z$ into (\ref{eq:K4.15}) gives
\begin{equation}\label{eq:K4.16}
(\lambda^{3}+ \lambda+ 1)z + (\epsilon \lambda^{2}+ b\lambda + \epsilon) =0.
\end{equation}
Let $\eta_{1}=\lambda^{3}+ \lambda+ 1$ and $\eta_{2}= \epsilon \lambda^{2}+ b\lambda +\epsilon$, equation (\ref{eq:K4.16}) becomes
\begin{equation}\label{eq:K4.17}
\eta_{1} z + \eta_{2} =0.
\end{equation}
By equations (\ref{eq:K4.14}) and (\ref{eq:K4.17}), we can deduce that
$$\xi_{2}\eta_{1} + \xi_{1}\eta_{2}=0.$$
Then
\begin{equation}\label{eq:K4.18}
[(\epsilon+a)\lambda^{2} + (a+b)\lambda + \epsilon](\epsilon \lambda^{2}+ b\lambda +\epsilon) + [\epsilon(\epsilon+a)\lambda + \epsilon^{2}](\lambda^{3}+ \lambda +1) =0.
\end{equation}
We have
\begin{equation}\label{eq:K4.19}
(\epsilon^{2} + ab + a\epsilon) \lambda^{3}+ (\epsilon^{2}+ab+ b^{2})\lambda^{2} =0,
\end{equation}
which implies
\begin{equation}\label{eq:K4.20}
\zeta_{1}\lambda+ \zeta_{2} =0,
\end{equation}
where $ \zeta_{1}= \epsilon^{2} + ab + a\epsilon=ac+b^2+c^2$ and $\zeta_{2}=\epsilon^{2}+ab+ b^{2}=ab+a^2+c^2$.

Next, we verify $ \zeta_{1}\neq 0$. Suppose that $ \zeta_{1}=0$. Then
$$a^{1+2^{2k}}+ a^{2^{k+1}}+ a^{2^{2k+1}} =0.$$
Since $a\neq0$, the above equation leads to
$$a^{2^{k+1}(2^{k}-1)}+ a^{(2^k-1)(2^k-1)}+ 1 =0.$$
Setting $\mu=a^{2^{k}-1}$, we obtain $\mu^{2^{2k}+2^k+1}=1$ and
\begin{equation}\label{eq:K4.21}
\mu^{2^{k+1}}+ \mu^{2^{k}-1}+ 1 =0.
\end{equation}
Note that
$$\zeta_{1}+ \zeta_{1}^{2^{k}}+ \zeta_{1}^{2^{2k}}=ab+ac+bc=0,$$
which implies that
\begin{equation}\label{eq:K4.22}
a^{2^{2k}+2^{k}}+ a^{2^{2k}+1}+ a^{2^{k}+1} =0.
\end{equation}
It leads to
\begin{equation}\label{eq:K4.23}
a^{2^{2k}-1} + a^{2^{2k}-2^{k}}+ 1 =0,
\end{equation}
which can be written as
\begin{equation}\label{eq:K4.25}
\mu^{2^{k}+1}+ \mu^{2^{k}}+ 1 =0.
\end{equation}
Multiplying equation (\ref{eq:K4.25}) by $\mu^{2^k-1}$,  and then adding equation (\ref{eq:K4.21}), we get
\begin{equation}\label{eq:K4.26}
\mu^{2^{k+1}-1} +1=0.
\end{equation}
It can be verified that
$$
\begin{array}{rl}
&\gcd(2^{2k}+2^k+1,2^{k+1}-1)=\gcd(2^{k+1}-1,2^k+2^{k-1}+1)\\
=&\gcd(3\cdot 2^{k-1}+1,2^k+3)=\gcd(3\cdot 2^{k}+2,2^k+3)=\gcd(7,2^k+3)=1.
\end{array}
$$
Then equation (\ref{eq:K4.26}) has a unique solution $\mu=1$. However, $\mu=1$ is not the solution of (\ref{eq:K4.21}).
Therefore, we get $ \zeta_{1}\neq 0$ and $$\lambda=\frac{\zeta_{2}}{\zeta_{1}}=\zeta_{1}^{2^k-1}=(ac+b^2+c^2)^{2^k-1},$$
which implies that $\lambda^{2^{2k}+2^k+1}=1$.

When $k\equiv 0~({\rm mod}\ 3)$, we can deduce that $\gcd(7,2^{2k}+2^k+1)=1$. If $\lambda^{3}+ \lambda +1=0$, then $\lambda^{7}=1$, which means that $\lambda=1$.
It leads to $\lambda^{3}+ \lambda +1=1$, which is a contradiction. Therefore, $\lambda^{3}+ \lambda +1\neq0$.
By (\ref{eq:K4.16}), we get a unique solution $x=z^{2^{k}}=\big(\frac{\epsilon(1+\lambda^{2})+b\lambda}{\lambda^{3}+ \lambda +1} \big)^{2^{k}}$ of $f(x)=a$ in (\ref{eq:K4.1}).

When $k\equiv 1~({\rm mod}\ 3)$, it can be verified that $2^{2k}\equiv 4({\rm mod}\ 7)$.
If $\lambda^{3}+ \lambda +1=0$, then $\lambda^{7}=1$, $z=\lambda^{-1}y=\lambda^{6}y$ and
$$x=y^{2^{2k}}= \lambda^{2^{2k}}z^{2^{2k}}=\lambda^{4}y.$$
Plugging them into equation (\ref{eq:K4.8'}), we obtain
$$(\lambda^{10}+ \lambda^{-2}+ 1)y = b,$$
which is equivalent to
\begin{equation}\label{eq:K4.27}
(\lambda^{5}+ \lambda^{3}+1)y = b.
\end{equation}
Supposing that $\lambda^{5}+ \lambda^{3}+1=0$, we get $\lambda^{5}+\lambda=0$, which implies that $\lambda=1$. However, $\lambda=1$ is not the solution of $\lambda^{5}+ \lambda^{3}+1=0$.  So $\lambda^{5}+ \lambda^{3}+1\neq0$.
Hence, $y=\frac{b}{\lambda^{5}+ \lambda^{3}+1}$,  thereby $x=\frac{b\lambda^{4}}{\lambda^{5}+ \lambda^{3}+1}$.
If $\lambda^{3}+ \lambda +1\neq0$, we can  similarly obtain one solution $x=z^{2^{k}}=\big(\frac{\epsilon(1+\lambda^{2})+b\lambda}{\lambda^{3}+ \lambda +1} \big)^{2^{k}}$ of (\ref{eq:K4.1}) from (\ref{eq:K4.16}).
This completes the proof.
\end{proof}

\subsection{The case of $n\equiv 1({\rm mod}\ 3)$}
In this subsection, we give one class of permutation trinomials over $\mathbb{F}_{2^{n}}$ when $n\equiv 1({\rm mod}\ 3)$.

\begin{theorem}
Let $k$ be a positive integer and $n=3k+1$. Then 
\begin{eqnarray*}
f(x)=x^{2^{2k+1}+2^{k+1}+1} + x^{2^{k+1}+1} + x
\end{eqnarray*}
is a permutation polynomial on $\mathbb{F}_{2^n}$.
\end{theorem}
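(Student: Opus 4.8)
The plan is to show, exactly as in the proofs of Theorems~\ref{th:K2} and~\ref{th:N2}, that for every $a\in\mathbb{F}_{2^n}$ the equation $f(x)=a$ has a unique solution. Put $y=x^{2^{k}}$, $z=x^{2^{2k}}$, $b=a^{2^{k}}$, $c=a^{2^{2k}}$, and exploit $x^{2^{3k+1}}=x$, which gives $x^{2^{k+1}}=y^{2}$, $x^{2^{2k+1}}=z^{2}$ and $z^{2^{k}}=x^{2^{3k}}=\sqrt{x}$. Then $f(x)=a$ together with its $2^{k}$-th and $2^{2k}$-th power conjugates becomes the system
\[
x\bigl(y^{2}z^{2}+y^{2}+1\bigr)=a,\qquad y\bigl(xz^{2}+z^{2}+1\bigr)=b,\qquad z\bigl(xy+x+1\bigr)=c .
\]
I would first settle $a=0$ and then attack $a\neq 0$ by eliminating variables from this system.

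For $a=0$ and $x\neq 0$, dividing $f(x)=0$ by $x$ and setting $w=x^{2^{k+1}}$ reduces the equation to $w^{2^{k}+1}+w+1=0$, so $w\neq0$ and $w^{2^{k}}=1+w^{-1}$. The map $\mu(t)=1+t^{-1}$ has order $3$ under composition (over $\mathbb{F}_{2}$ its matrix $\left(\begin{smallmatrix}1&1\\1&0\end{smallmatrix}\right)$ cubes to the identity), and since raising to the $2^{k}$-th power sends the relation $w^{2^{k}}=\mu(w)$ successively to $w^{2^{2k}}=\mu^{2}(w)$ and $w^{2^{3k}}=\mu^{3}(w)=w$, we obtain $w^{2^{3k}}=w$. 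Hence $w$ lies in the fixed field $\mathbb{F}_{2^{\gcd(3k,3k+1)}}=\mathbb{F}_{2}$, and neither $w=0$ nor $w=1$ solves $w^{2^{k}+1}+w+1=0$; this contradiction forces $x=0$. The step uses only $\gcd(3k,3k+1)=1$, valid for every $k$, which already signals why—unlike Theorems~\ref{th:K2} and~\ref{th:N2}—no congruence condition on $k$ should be required.

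For $a\neq 0$ (so $x,y,z\neq0$) I would eliminate $x$. Dividing each equation by its leading variable gives $y^{2}(z+1)^{2}=(a+x)/x$, $z^{2}(x+1)=(b+y)/y$ and $x(y+1)=(c+z)/z$, and the last yields $x=\dfrac{c+z}{z(y+1)}$ together with $x+1=\dfrac{c+zy}{z(y+1)}$. Substituting this into the middle equation produces one relation in $y,z$ (quadratic in $y$), and substituting into the first and simplifying by means of that relation produces a second relation which is $\emph{linear}$ in $y$. Solving the linear relation expresses $y$ as a rational function of $z$; plugging this back leaves a single equation $P(z)=0$, of which the genuine value $z=x^{2^{2k}}$ is a root, and $x=z^{2^{k+1}}$ is then recovered uniquely. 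The remaining task is to show $z$ is the only admissible root: I expect the spurious factor to collapse—much as in the passage from (\ref{eq:K4.10}) to (\ref{eq:K4.26}) in the proof of Theorem~\ref{th:N2}—to a condition of the shape $\lambda^{m}=1$ with $\gcd(m,2^{n}-1)=1$, forced by an identity such as $\gcd(2k+1,3k+1)=1$ or $\gcd(3k,3k+1)=1$, each of which holds for all $k$.

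The main obstacle is precisely this $a\neq0$ elimination. In Theorems~\ref{th:K2} and~\ref{th:N2} the element $a+b+c$ lies in the proper subfield $\mathbb{F}_{2^{k}}$, and the substitution $x=\epsilon+y+z$ linearizes everything; here $n=3k+1$ offers no such subfield, since the orbit of $x$ under $t\mapsto t^{2^{k}}$ has full length $n$. The three equations are therefore genuinely asymmetric—squares sit on both $y$ and $z$ in the first equation, on $z$ only in the second, and nowhere in the third—so the elimination is heavier and the final uniqueness must be extracted from a greatest-common-divisor and multiplicative-order argument rather than from a subfield reduction. Pinning down the reduced univariate equation $P(z)=0$ and proving that its single admissible root is the genuine one is where the real work lies.
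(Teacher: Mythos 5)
Your setup coincides with the paper's: after writing $y=x^{2^{k}}$, $z=x^{2^{2k}}$, $b=a^{2^{k}}$, $c=a^{2^{2k}}$, your three equations $x(y^{2}z^{2}+y^{2}+1)=a$, $y(xz^{2}+z^{2}+1)=b$, $z(xy+x+1)=c$ are exactly the system the paper works with. Your treatment of $a=0$ is complete and correct, and in fact cleaner than the paper's: the paper eliminates $y$ and $z$ from the three conjugate equations to force $x=1$ and then observes $f(1)=1\neq 0$, whereas your observation that $w=x^{2^{k+1}}$ satisfies $w^{2^{k}}=1+w^{-1}$, that this fractional-linear map has order $3$ and commutes with Frobenius, hence $w^{2^{3k}}=w$ and $w\in\mathbb{F}_{2^{\gcd(3k,3k+1)}}=\F$, is a tidy alternative that visibly needs no congruence condition on $k$.

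The problem is the case $a\neq 0$, which is where essentially all of the content of the theorem lies: there you have a plan, not a proof. You propose to substitute $x=\frac{c+z}{z(y+1)}$, obtain a relation linear in $y$, and reduce to a univariate equation $P(z)=0$, but you never carry out the elimination, never exhibit $P$, and explicitly defer the uniqueness argument as ``where the real work lies.'' That deferred step is the theorem. Moreover, your anticipated endgame --- a spurious factor collapsing to $\lambda^{m}=1$ with $\gcd(m,2^{n}-1)=1$ --- is not what actually happens. The paper's elimination (a longer but elementary chain of combinations of the three equations and their Frobenius twists, from (\ref{eq:d2.8}) through (\ref{eq:d3.7})) terminates in a quadratic in $x$ alone,
\begin{equation*}
(a^{2}+a^{2}b^{2}+b^{4}+c^{4}+1)(x^{2}+a^{2}) + ab^{2}(a^{2}+b^{2}+c^{2}+1)(x + a)=0,
\end{equation*}
whose leading coefficient is shown to be nonzero (its vanishing forces $a=1$ and $b=c=1$, a contradiction); the two candidate roots are $x=a$ and one other, and uniqueness is settled by checking that $x=a$ is a genuine preimage only when $a=1$. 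No gcd or multiplicative-order argument enters. Note also that your substitution divides by $y+1$, so the case $y=1$ (equivalently $x=1$, i.e.\ $a=1$) needs separate handling --- precisely the exceptional case the paper isolates. In short: your $a=0$ half stands on its own, but the $a\neq 0$ half is missing and your sketch of it points at the wrong target.
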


\begin{proof}  We prove that for any fixed $a \in\mathbb{F}_{2^n}$, the equation
\begin{eqnarray}\label{eq: d2.1}
x^{2^{2k+1}+2^{k+1}+1} + x^{2^{k+1}+1} + x =a
\end{eqnarray}
has one unique solution in $\mathbb{F}_{2^n}$.
It is obvious that $f(0)=0$. For $a=0$, we have $x=0$ or
\begin{eqnarray}\label{eq:d2.3}
x^{2^{2k+1}+2^{k+1}} + x^{2^{k+1}} + 1 =0.
\end{eqnarray}
Let $y=x^{2^{k}}$, $z=y^{2^{k}}$, then $x=z^{2^{k+1}}$. Equation (\ref{eq:d2.3}) can be written as
\begin{eqnarray}\label{eq:d2.4}
y^{2}z^{2} + y^{2} + 1=0.
\end{eqnarray}
Raising both sides of equation (\ref{eq:d2.4}) to $2^{k}$-th power, we get
\begin{eqnarray}\label{eq:d2.5}
xz^{2} + z^{2} + 1 =0.
\end{eqnarray}
Further, raising both sides of equation (\ref{eq:d2.5}) to the $2^{k}$-th power gives
\begin{eqnarray}\label{eq:d2.6}
xy + x + 1 =0.
\end{eqnarray}
Combining equations (\ref{eq:d2.3}), (\ref{eq:d2.4}) and  (\ref{eq:d2.5}) eliminates the indeterminate $y$ and $z$, we obtain $x=1$.
However $f(1)=1$. So equation (\ref{eq:d2.3}) has no solution. Hence $f(x)=0$ only has a solution $x=0$.

When $a\neq0$, we prove that $f(x)=a$ has one unique solution. Let $y=x^{2^{k}}$, $z=y^{2^{k}}$, $b=a^{2^{k}}$ and $c=b^{2^{k}}$,
then $x=z^{2^{k+1}}$ and $a=c^{2^{k+1}}$. Therefore, we obtain
\begin{numcases}{} 
       xy^{2}z^{2}+xy^{2}+x =a,   \label{eq:d2.7} \\
       xyz^{2}+yz^{2}+y =b,   \label{eq:d2.7'} \\
       xyz+xz+ z =c.  \label{eq:d2.7''}
\end{numcases}
Combining equations (\ref{eq:d2.7}) and (\ref{eq:d2.7'}) eliminates the indeterminate $z$, we have
\begin{eqnarray}\label{eq:d2.8}
  x^{2}y^{2}+ bxy +x^{2} + ax+x+a=0.
\end{eqnarray}
Adding equations (\ref{eq:d2.7}) and (\ref{eq:d2.7'}) multiply by $y$ together results in
\begin{eqnarray}\label{eq:d2.9}
xy^{2}+ y^{2}z^{2} + y^{2} + x + by + a =0.
\end{eqnarray}
Computing $(\ref{eq:d2.7})+ (\ref{eq:d2.7'})+(\ref{eq:d2.7''})*yz$, there is
\begin{eqnarray}\label{eq:d2.9}
xy^{2}+ cyz +x + y + a + b =0.
\end{eqnarray}
By (\ref{eq:d2.7''})$*cy+$(\ref{eq:d2.9})$*(xy + x+1)$, this gives
\begin{eqnarray}\label{eq:d3.0}
x^{2}y^{3}+ x^{2}y^{2}+ x^{2}y +(a+b+1)xy +x^{2} + (a+b+1)x+ (c^{2}+1)y +(a+b)=0.
\end{eqnarray}
Multiplying equation (\ref{eq:d2.8}) by $y$ and then adding (\ref{eq:d3.0}), it leads to
\begin{eqnarray}\label{eq:d3.1}
 x^{2}y^{2}+ bxy^{2} + bxy +x^{2} + (a+b+1)x+ (a+c^{2}+1)y +(a+b)=0.
\end{eqnarray}
Adding (\ref{eq:d2.8}) and (\ref{eq:d3.1}) derives
\begin{eqnarray}\label{eq:d3.2}
bxy^{2} + (a+c^{2}+1)y +(bx+b) =0.
\end{eqnarray}
Further, multiplying equation (\ref{eq:d2.9}) by $b$ and then adding (\ref{eq:d3.2}), we deduce
\begin{eqnarray}\label{eq:d3.3}
bcyz + (a+b+c^{2}+1)y + (ab + b^{2} +b) =0.
\end{eqnarray}
Raising both sides of equation (\ref{eq:d3.3}) to the $2^{2k+1}$-th power results in
\begin{eqnarray}\label{eq:d3.4}
abxy +(a+b^{2}+c^{2}+1)x + (ac^{2}+a^{2}+a) =0.
\end{eqnarray}
Multiply equation (\ref{eq:d2.8}) by $a^{2}b^{2}$,  and then adding the square of $(\ref{eq:d3.4})$, we obtain
\begin{eqnarray}\label{eq:d3.5}
a^{2}b^{3}xy +  (a^{2}+b^{4}+c^{4}+1)x^{2} + a^{2}b^{2}(x^{2} + ax +x +a)+(a^{2}c^{4}+a^{4}+a^{2}) =0.
\end{eqnarray}
Furthermore, multiplying equation (\ref{eq:d3.4}) by $ab^{2}$ and then adding (\ref{eq:d3.5}), there is
\begin{eqnarray}\label{eq:d3.6}
(a^{2}+a^{2}b^{2}+b^{4}+c^{4}+1)x^{2} + ab^{2}(a^{2}+b^{2}+c^{2}+1)x + a^{2}(a^{2}+b^{2}+b^{2}c^{2}+c^{4}+1)=0.
\end{eqnarray}
Equation (\ref{eq:d3.6}) can rewrite as
\begin{eqnarray}\label{eq:d3.7}
(a^{2}+a^{2}b^{2}+b^{4}+c^{4}+1)(x^{2}+a^{2}) + ab^{2}(a^{2}+b^{2}+c^{2}+1)(x + a)=0.
\end{eqnarray}
Supposing
\begin{eqnarray}\label{eq:d3.8}
a^{2}+a^{2}b^{2}+b^{4}+c^{4}+1=0,
\end{eqnarray}
raising both sides of equation (\ref{eq:d3.8}) to the $2^{k}$-th power, this leads to
\begin{eqnarray}\label{eq:d3.9}
a^{2}+b^{2}+b^{2}c^{2}+c^{4}+1=0.
\end{eqnarray}
Adding equations (\ref{eq:d3.8}) and (\ref{eq:d3.9}), we get
\begin{eqnarray}\label{eq:d3.10}
a^{2}+b^{2}+c^{2}+1=0.
\end{eqnarray}
Again raising both sides of equation (\ref{eq:d3.10}) to the $2^{k}$-th power, and then adding equation (\ref{eq:d3.10}). There is
$$a^{2}+a=0,$$
which means $a=0$ or $a=1$. Since $a\neq0$, we obtain $b=c=1$ from $a=1$. Thereby $a^{2}+a^{2}b^{2}+b^{4}+c^{4}+1\neq0$, this is a contradiction.
Hence, there are two solutions of equation (\ref{eq:d3.7}), which is $x=a$ or
$$x=a+\frac{a^{2}+b^{2}+c^{2}+1}{a^{2}+a^{2}b^{2}+b^{4}+c^{4}+1}.$$

If $x=a$, then $y=b$ and we deduce
$ab^{2}(a+1)=0$ from equation (\ref{eq:d2.8}). Because $a\neq0$, this leads to $b\neq0$, and hence $a=1$. Therefore equation has one unique solution $x=1$.

If $x\neq a$, then there is one unique solution $x=a+\frac{a^{2}+b^{2}+c^{2}+1}{a^{2}+a^{2}b^{2}+b^{4}+c^{4}+1}$.
The proof finishes.
\end{proof}

\subsection{The case of $n\equiv 2({\rm mod}\ 3)$}
In this subsection, two classes of permutation trinomials are proposed over $\mathbb{F}_{2^{n}}$ when $n\equiv 2({\rm mod}\ 3)$.

\begin{theorem}
Let $k$ be a positive integer and $n=3k-1$. Then 
\begin{eqnarray*}
f(x)=x^{2^{3k-1}-2^{2k}+2^{k}} + x^{2^{k}-1} + x
\end{eqnarray*}
is a permutation polynomial on $\mathbb{F}_{2^n}$.
\end{theorem}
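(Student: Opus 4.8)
The plan is to follow the multivariate strategy of the preceding theorems. First I would put $f(x)=a$ into rational form. Since $n=3k-1$ we have $x^{2^{3k-1}}=x$ for every $x\in\mathbb{F}_{2^n}$, so writing $y=x^{2^{k}}$ and $z=x^{2^{2k}}$ the leading exponent reduces as $x^{2^{3k-1}-2^{2k}+2^{k}}=\frac{xy}{z}$, and $x^{2^{k}-1}=\frac{y}{x}$. Hence for $x\neq0$ the equation becomes $\frac{xy}{z}+\frac{y}{x}+x=a$. The one arithmetic fact driving everything is $\gcd(k,3k-1)=\gcd(k,-1)=1$, which is presumably why no congruence condition on $k$ is needed. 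Throughout I use the Frobenius $\sigma=(\cdot)^{2^{k}}$, under which $x\mapsto y\mapsto z\mapsto x^{2}$, because $\sigma^{3}=(\cdot)^{2^{3k}}=(\cdot)^{2}$ on $\mathbb{F}_{2^n}$ (as $2^{3k}\equiv 2\ (\mathrm{mod}\ 2^{3k-1}-1)$), and $a\mapsto b\mapsto c\mapsto a^{2}$ with $b=a^{2^{k}}$, $c=a^{2^{2k}}$.

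For $a=0$ I would argue directly. Assuming $x\neq0$, multiplying $\frac{xy}{z}+\frac{y}{x}+x=0$ by $\frac{z}{x}$ gives $yz=x^{2}(y+z)$; applying $\sigma$ yields $x^{2}z=y^{2}(z+x^{2})$. Substituting $x^{2}=\frac{yz}{y+z}$ from the first relation into the second collapses, after the characteristic-$2$ cancellation $yz+yz=0$, to $yz^{2}=y^{2}z^{2}$, forcing $y=1$ and hence $x=1$; but $f(1)=1\neq0$, a contradiction. The degenerate case $y+z=0$ is excluded since it would force $y=0$. Thus $f(x)=0$ has only the solution $x=0$.

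For $a\neq0$ (so $x,y,z\neq0$) I would apply $\sigma$ and $\sigma^{2}$ to the rational equation to obtain a closed cyclic system, clear denominators, and combine the first two cleared equations to eliminate $z$. This produces the key relation $y(x^{2}+ax+b)=(1+b)x^{2}+abx$, expressing $y$ rationally in $x$. Applying $\sigma$ to it twice gives $z(y^{2}+by+c)=(1+c)y^{2}+bcy$ and $x^{2}(z^{2}+cz+a^{2})=(1+a^{2})z^{2}+a^{2}cz$. Eliminating $y$ and $z$ among these three relations by successive substitution with suitable polynomial multipliers (exactly as in the $n=3k+1$ theorem) yields a single bounded-degree polynomial equation in $x$ with coefficients in $a,b,c$. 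I expect heavy characteristic-$2$ cancellation to reduce it to a quadratic that factors as a product of $(x+a)$-type factors, by analogy with the final quadratic obtained for $n=3k+1$, leaving at most one admissible root.

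The main obstacle is the last stage: proving that the coefficient multiplying the top-degree term does not vanish, so that the root is genuinely unique. Mirroring the $\zeta_{1}\neq0$ argument of Theorem~\ref{th:N2} and the coefficient argument for the case $n=3k+1$, I would assume that coefficient is $0$, raise the resulting identity to $2^{k}$-th powers, add the conjugates, and reduce to a relation of the form $x^{d}=1$ (or $a^{d}=1$) whose exponent $d$ is coprime to $2^{3k-1}-1$; the coprimality follows from $\gcd(k,3k-1)=1$ together with elementary gcd manipulations of Mersenne-type numbers. This forces the offending value into $\mathbb{F}_{2}$, which is then eliminated by direct substitution. The degenerate sub-cases arising in the elimination (vanishing denominators such as $x^{2}+ax+b=0$, and $x=y$, the latter forcing $x\in\mathbb{F}_{2}$ since $\gcd(k,3k-1)=1$) are routine and would be checked alongside, completing the proof that $f(x)=a$ has a unique solution for every $a$.
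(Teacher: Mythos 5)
Your setup coincides with the paper's: the rational system $\frac{xy}{z}+\frac{y}{x}+x=a$ together with its two Frobenius conjugates, the $a=0$ case (your route via $yz=x^2(y+z)$ reaching $y=1$, $x=1$, $f(1)=1$ is a correct variant of the paper's, which eliminates $z$ to get $x^4y(y+1)=0$), and the elimination of $z$ producing exactly the paper's linear relation $(x^2+ax+b)y=(1+b)x^2+abx$. Up to that point there is no problem.

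The gap is in how you propose to finish the case $a\neq0$. The eliminant in $x$ is not a quadratic factoring into $(x+a)$-type factors: pairing the linear relation with a second relation $(x^2+c+1)y^2+(cx^2+acx)y+a^2x^2=0$ and eliminating $y$ yields a genuine cubic $\alpha x^3+\beta x^2+\gamma x+\theta=0$ with $\alpha=a^2+b^2+bc+c+1$, $\beta=abc$, $\gamma=(a^2+1)\alpha$ and $\theta=abc(a^2+1)$, so that $\alpha\theta=\beta\gamma$. The depressed cubic therefore has zero constant term, and for $\alpha\neq0$ the cubic has exactly the two roots $x=\alpha^{-1}\beta$ and $x=a+1$ (for $\alpha=0$ it degenerates to $x=a+1$ alone). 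Hence the real difficulty is not that a leading coefficient might vanish --- $\alpha=0$ is the \emph{easy} case --- but that the eliminant generically has two distinct roots, and one must show that at most one of them can satisfy the original pre-elimination system. Your proposed closing move (prove a top coefficient is nonzero by a gcd argument, as in the $\zeta_1\neq0$ step of Theorem~\ref{th:N2}) establishes solvability of a \emph{linear} equation and cannot deliver uniqueness for a cubic with two roots. The paper's resolution, which is the essential content here, is different: substituting $x=a+1$ (hence $y=b+1$) back into the linear relation forces $(a+b)(a+b+1)=0$, and in each of the two resulting cases either the original system gives a contradiction or the root $a+1$ coincides with $\alpha^{-1}\beta$, leaving a single admissible root. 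Without this, or an equivalent analysis of the second root, the argument is incomplete.
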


\begin{proof} We shall prove that for each fixed $a \in\mathbb{F}_{2^n}$, the equation
\begin{eqnarray}\label{eq:J3.1}
f(x)= x^{2^{3k-1}-2^{2k}+2^{k}} + x^{2^{k}-1} + x =a
\end{eqnarray}
has one unique solution in $\mathbb{F}_{2^n}$.

When $a=0$, we claim that $x=0$. Otherwise $x\neq0$ is a solution of the equation
\begin{eqnarray}\label{eq:J3.2}
x^{2^{3k-1}-2^{2k}+2^{k}} + x^{2^{k}-1} + x =0.
\end{eqnarray}
This implies that equation
\begin{eqnarray}\label{eq:J3.3}
x^{2^{3k-1}-2^{2k}+2^{k}-1} + x^{2^{k}-2} + 1 =0
\end{eqnarray}
has nonzero solution. Let $y=x^{2^{k}}$, $z=y^{2^{k}}$, then $x^{2}=z^{2^{k}}$. Hence there is
\begin{numcases}{}
      \frac{y}{z} + \frac{y}{x^{2}} + 1 =0,  \label{eq:J3.4} \\
      \frac{z}{x^{2}} + \frac{z}{y^{2}} + 1 =0. \label{eq:J3.4'}
\end{numcases}
Combining equations (\ref{eq:J3.4}) and (\ref{eq:J3.4'}) to eliminate $z$, we obtain
\begin{eqnarray}\label{eq:J3.5}
x^{4}y(y + 1) =0.
\end{eqnarray}
Since $x\neq0$, this means $y\neq0$. Thereby we get $y=1$. It leads to $x=1$ for
$$\gcd(2^{k}-1, 2^{n}-1)=2^{(k, 3k-1)}-1=1.$$
However $f(1)=1$, this is a contradiction. Hence $f(x)=0$ if and only if $x=0$.

When $a\neq0$, we will verify that $f(x)=a$ has one unique nonzero solution. Let $y=x^{2^{k}}$, $z=y^{2^{k}}$, $b=a^{2^{k}}$ and $c=b^{2^{k}}$,
then $x^{2}=z^{2^{k}}$ and $a^{2}=c^{2^{k}}$. Therefore, we obtain
\begin{equation}\label{eq:J3.6}
\left\{
\begin{array}{l}
  \frac{xy}{z} + \frac{y}{x} + x =a,   \\
  \frac{yz}{x^{2}} + \frac{z}{y} + y =b,  \\
  \frac{x^{2}z}{y^{2}} + \frac{x^{2}}{z} + z =c.
\end{array}
\right.
\end{equation}
Eliminating the indeterminate $z$ in (\ref{eq:J3.6}), we get
\begin{numcases}{} 
      (x^{2}+ax+b)y + (x^{2}+bx^{2}+abx)=0,   \label{eq:J3.7} \\
      (x^{2}+c+ 1)y^{2} +(cx^{2}+acx)y + a^{2}x^{2} =0. \label{eq:J3.7'}
\end{numcases}
Combining equations (\ref{eq:J3.7}) and (\ref{eq:J3.7'}) to eliminate the indeterminate $y$, we obtain
\begin{equation}\label{eq:J3.8}
\alpha x^{3}+\beta x^{2}+\gamma x+\theta =0,
\end{equation}
where $\alpha =a^{2}+b^{2}+bc+c+1$, $\beta =abc$, $\gamma =a^{4}+a^{2}bc+a^{2}b^{2}+ a^{2}c+b^{2}+bc +c+1$, $\theta =a^{3}bc+abc$.

If $\alpha =0$, then we get $\gamma =0$. From (\ref{eq:J3.8}) we can deduce that $x^{2}=a^{2}+1$, which implies $x=a+1$. Therefore equation (\ref{eq:J3.8}) has one solution.

Next, substituting $x$ with $u+\alpha^{-1}\beta$ in (\ref{eq:J3.8}), we get
\begin{equation}\label{eq:J3.9}
u^{3} + (\alpha^{-2}\beta^{2}+\alpha^{-1}\gamma)u + (\alpha^{-2}\beta\gamma+\alpha^{-1}\theta)=0.
\end{equation}
Since $\alpha\theta = \beta\gamma$, this implies $\alpha^{-2}\beta\gamma+\alpha^{-1}\theta=0$. Hence we obtain $u=0$ or $u=(\alpha^{-2}\beta^{2}+\alpha^{-1}\gamma)^{1/2}$. 
Thereby there is $x=\alpha^{-1}\beta$ or $x=(\alpha^{-2}\beta^{2}+\alpha^{-1}\gamma)^{1/2}+\alpha^{-1}\beta$.
Since $\gamma =(a^{2}+1)\alpha$, then $\alpha^{-1}\gamma =a^{2}+1$. It can be easily checked that $x=(\alpha^{-2}\beta^{2}+\alpha^{-1}\gamma)^{1/2}+\alpha^{-1}\beta=a+1$. Assume that $x=a+1$, then $y=b+1$. Plugging $y=b+1$ into equation (\ref{eq:J3.7}), by some simplifying computation we obtain
$$(a+b)(a+b+1)=0.$$

If $a=b$, we have $x=y=z=a+1$. Further we get $x=0$ from equation (\ref{eq:J3.6}), which is a contradiction.

If $a=b+1$, we deduce that $x=a+1$, $y=a$, $z=a+1$. Plugging them into equation (\ref{eq:J3.6}), there is
\begin{equation}\label{eq:J3.10}
a^{2}+a+1=0.
\end{equation}

Suppose $n$ is odd, it can be easily verified that equation (\ref{eq:J3.10}) has no solution.
Suppose $n$ is even, then $k$ is odd and $2^{k}\equiv 2({\rm mod}\ 3)$. From equation (\ref{eq:J3.10}), we obtain two solutions  $a=\omega$ and $\omega^{2}$.
If $a=\omega$, it derives that $$x=\omega^{2}, y=\omega, z=\omega^{2}.$$
If $a=\omega^{2}$, it leads to $$x=\omega, y=\omega^{2}, z=\omega.$$
Without loss of generality, we consider the case of $a=\omega$. We can deduce that
$\alpha=a^{2}+b^{2}+bc+c+1=\omega^{2}$ and $\beta =abc=1$. Thereby, we get another solution $x=\alpha^{-1}\beta=\omega^{2}$. But, it equals to the solution
$x=a+1=\omega^{2}$.

Therefore, equation (\ref{eq:J3.8}) has a unique solution. We complete the proof.
\end{proof}

\begin{theorem}
Let $k$ be a positive integer and $n=3k-1$. Then 
\begin{eqnarray*}
f(x)=x^{2^{2k}+2^{k}+1} + x^{2^{2k}+1} + x
\end{eqnarray*}
is a permutation polynomial on $\mathbb{F}_{2^n}$.
\end{theorem}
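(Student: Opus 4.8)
The plan is to follow the multivariate method of the preceding theorems: fix $a\in\mathbb{F}_{2^n}$ and prove that $f(x)=a$ has exactly one solution. First I would introduce the Frobenius variables $y=x^{2^{k}}$, $z=x^{2^{2k}}=y^{2^{k}}$ and the constants $b=a^{2^{k}}$, $c=b^{2^{k}}$. Since $n=3k-1$ forces $x^{2^{3k-1}}=x$, the Frobenius cycle closes: $z^{2^{k}}=x^{2^{3k}}=x^{2}$ and likewise $c^{2^{k}}=a^{2}$. The three monomials collapse to $x^{2^{2k}+2^{k}+1}=xyz$ and $x^{2^{2k}+1}=xz$, so $f(x)=a$ reads $xyz+xz+x=a$; applying $\sigma\colon t\mapsto t^{2^{k}}$ twice gives the system
\begin{equation*}
\left\{
\begin{array}{l}
xyz+xz+x=a,\\
x^{2}yz+x^{2}y+y=b,\\
x^{2}y^{2}z+y^{2}z+z=c.
\end{array}
\right.
\end{equation*}

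For $a=0$ (hence $b=c=0$) I would show $x=0$. If $x\neq0$, then $x,y,z$ are all nonzero and the three equations become $yz+z+1=0$, $x^{2}z+x^{2}+1=0$, $x^{2}y^{2}+y^{2}+1=0$. The first gives $z=\tfrac{1}{y+1}$ (note $y\neq1$, else the first equation reduces to $1=0$); substituting into the second yields $x^{2}=\tfrac{y+1}{y}$, and feeding this into the third collapses to $y^{2}=y$, forcing $y\in\{0,1\}$, a contradiction. Hence $f(x)=0\iff x=0$.

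For $a\neq0$ all of $x,y,z$ are nonzero, and the strategy is to eliminate $z$ and then $y$. Solving for $z$ from the first two equations and equating produces a quadratic in $y$, namely $(x^{2}+1)y^{2}+(ax+b+1)y+b=0$; doing the same with the first and third gives a second quadratic $(a+x)(x^{2}+1)y^{2}+cxy+(a+x+cx)=0$. After scaling these share the same $y^{2}$-coefficient, so combining them produces a linear relation $P(x)\,y+Q(x)=0$ with explicit $P,Q\in\mathbb{F}_{2^{n}}[x]$ of degrees $2$ and $1$ (one computes $P(x)=ax^{2}+(a^{2}+b+c+1)x+ab+a$ and $Q(x)=(b+c+1)x+ab+a$). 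When $P(x)\neq0$, I would set $y=Q(x)/P(x)$, substitute into the first quadratic, and clear denominators to reach a single polynomial equation $R(x)=0$ of degree at most four; the degenerate branches $y=1$ (which forces $x=1$, $a=1$) and $P(x)=0$ would be treated separately and shown to contribute no extra solution.

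The hard part will be the final step: showing $R(x)=0$ has a unique admissible root in $\mathbb{F}_{2^n}$. I expect $R$ to factor so that, as in the $n=3k+1$ theorem, only two candidate values of $x$ survive (one a short rational expression in $a,b,c$), and the real work is to discard the spurious candidate and confirm the consistency ($y=x^{2^{k}}$, $z=y^{2^{k}}$) of the surviving one. This should rest on the Frobenius relations $b=a^{2^{k}}$, $c=a^{2^{2k}}$, $c^{2^{k}}=a^{2}$ together with a gcd/order argument; here the identity $\gcd(k,3k-1)=1$, hence $\gcd(2^{k}-1,2^{n}-1)=1$, is likely the tool that lets one recover $x$ from expressions such as $x^{2^{k}-1}$ and eliminate the extraneous root. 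Managing this factorization and the case analysis around the vanishing of $P(x)$ and of the coefficient expressions is the crux of the argument.
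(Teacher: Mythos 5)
Your setup coincides with the paper's: the same Frobenius variables, the same three-equation system, and the same first quadratic $(x^{2}+1)y^{2}+(ax+b+1)y+b=0$. The $a=0$ case is essentially fine (the last substitution actually collapses to $y+1=0$ rather than $y^{2}=y$, but either way it contradicts $y\neq 1$), and your $P(x)y+Q(x)=0$ with $P(x)=ax^{2}+(a^{2}+b+c+1)x+ab+a$ and $Q(x)=(b+c+1)x+ab+a$ is a correct consequence of the system. The genuine gap is everything after that in the case $a\neq0$: from ``the hard part will be the final step'' onward you have a plan, not an argument. You never show that the resulting equation in $x$ has at most one admissible root; you only ``expect'' a factorization with two candidates and a discardable spurious one, modelled on the $n=3k+1$ theorem. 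That expectation is also not how this case actually resolves: with $P$ quadratic in $x$, back-substituting $y=Q/P$ into the first quadratic leaves you a cubic or quartic in $x$ whose factorization and root count you would still owe, and none of it is supplied. Since injectivity of $f$ on $\mathbb{F}^{*}_{2^n}$ is the heart of the theorem, the proposal as written does not prove the statement.

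For comparison, the paper avoids the quartic entirely by choosing a different combination of the eliminants: it derives the relation $(x^{2}+1)y=a^{-1}\bigl((b+c+1)x+a\bigr)$, whose $y$-coefficient matches the $y^{2}$-coefficient of the first quadratic. Substituting back and clearing denominators makes the constant term vanish identically, so after dividing by $x\neq0$ one is left with the \emph{linear} equation $(a^{2}c+a^{2}+b^{2}+c^{2}+1)x+(a^{3}+ab^{2}+abc+ac+a)=0$. Uniqueness is then immediate unless the leading coefficient vanishes, and in that case the Frobenius relations $b=a^{2^{k}}$, $c=b^{2^{k}}$, $c^{2^{k}}=a^{2}$ force $a^{2}+b^{2}+c+1=0$, whence the constant term equals $abc\neq0$ and that branch contributes no solution at all. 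If you want to complete your version, you should either carry out and factor your degree-$\le 4$ equation explicitly, or redo the elimination so as to land on this linear equation; the gcd facts you invoke at the end are not the missing ingredient.
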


\begin{proof} For each fixed $a \in\mathbb{F}_{2^n}$, it suffices to prove that equation
\begin{eqnarray}\label{eq:D3.1}
f(x)= x^{2^{2k}+2^{k}+1} + x^{2^{2k}+1} + x =a
\end{eqnarray}
has one unique solution in $\mathbb{F}_{2^n}$.

We first verify that $f(x)=0$ if and only if $x=0$. Assume that $x\neq0$ is a solution of the equation
\begin{eqnarray}\label{eq:D3.2}
f(x)=x^{2^{2k}+2^{k}+1} + x^{2^{2k}+1} + x =0,
\end{eqnarray}
which implies that equation
\begin{eqnarray}\label{eq:D3.3}
x^{2^{2k}+2^{k}} + x^{2^{2k}} +1 =0
\end{eqnarray}
has nonzero solution. Let $y=x^{2^{k}}$, $z=y^{2^{k}}$, then $x^{2}=z^{2^{k}}$. The above equation leads to
\begin{numcases}{}
        yz + z + 1 =0,   \label{eq:D3.4} \\
        x^{2}z + x^{2} + 1 =0,   \label{eq:D3.4'} \\
        x^{2}y^{2} + y^{2} + 1=0.  \label{eq:D3.4''}
\end{numcases}
Combining (\ref{eq:D3.4}) and (\ref{eq:D3.4'}) to eliminate the indeterminate $z$, we obtain 
\begin{eqnarray}\label{eq:D3.5}
x^{2}y + y + 1=0.
\end{eqnarray}
Multiplying equation (\ref{eq:D3.5}) by $y$ and then adding (\ref{eq:D3.4''}) gives
$$y=1,$$
this means $x=1$. However $f(1)=1$, it is a contradiction. So $f(x)=0$ if and only if $x=0$.

When $a\neq0$, we will show that $f(x)=a$ has one unique nonzero solution. Let $y=x^{2^{k}}$, $z=y^{2^{k}}$, $b=a^{2^{k}}$, $c=b^{2^{k}}$,
then $x^{2}=z^{2^{k}}$ and $a^{2}=c^{2^{k}}$. Hence we obtain
\begin{equation}\label{eq:D3.6}
\left\{
\begin{array}{l}
      xyz + xz + x =a,  \\
      x^{2}yz + x^{2}y + y =b,   \\
      x^{2}y^{2}z + y^{2}z + z=c.
\end{array}
\right.
\end{equation}
Eliminating the indeterminate $z$ in (\ref{eq:D3.6}), we deduce
\begin{numcases}{}
      (x^{2}+1)y^{2} + (ax+b+1)y + b =0,   \label{eq:D3.7}\\
      (x^{4}+1)y^{3} +(bx^{2}+b)y^{2}+ (cx^{2}+x^{2}+1)y + b =0. \label{eq:D3.7'}
\end{numcases}
Combining equations (\ref{eq:D3.7}) and (\ref{eq:D3.7'}) results in
\begin{equation}\label{eq:D3.8}
(ax^{3}+x^{2}+ax+1)y^{2} + (bx^{2}+cx^{2}+x^{2}+b+1)y + b =0.
\end{equation}
By equations (\ref{eq:D3.7}) and (\ref{eq:D3.8}), there is
\begin{equation}\label{eq:D3.9}
(x^{2} +1)y = a^{-1}(bx+cx+x+a).
\end{equation}

Obviously, $f(1)=1$. We claim that $x=1$ if and only if $a=1$. Suppose that $a=1$ and $x\neq1$, then we get that $b=c=1$ and $y=\frac{1}{x+1}$ from (\ref{eq:D3.9}). Plugging $y=\frac{1}{x+1}$ into equation (\ref{eq:D3.7}), we obtain $\frac{x}{x+1}=0$, which implies $x=0$. This contradicts the first assumption $x\neq0$.

Next, we consider the case of $a\neq0,1$. Plugging $y=\frac{(b+c+1)x+a}{a(x^{2} +1)}$ into equation (\ref{eq:D3.7}) leads to
\begin{equation}\label{eq:D3.10}
(a^{2}c+a^{2}+b^{2}+c^{2}+1)x +(a^{3}+ab^{2}+abc+ac+a) =0.
\end{equation}
If
\begin{equation}\label{eq:D3.11}
a^{2}c+a^{2}+b^{2}+c^{2}+1=0,
\end{equation}
raising both sides of (\ref{eq:D3.11}) to the $2^{k}$-th power gives
\begin{equation}\label{eq:D3.12}
a^{2}b^{2}+a^{4}+b^{2}+c^{2}+1=0.
\end{equation}
Adding (\ref{eq:D3.11}) and (\ref{eq:D3.12}), and then dividing $a^{2}$, there is
\begin{equation}\label{eq:D3.13}
a^{2}+b^{2}+c+1=0.
\end{equation}
Since $a\neq0$, we deduce that $a^{3}+ab^{2}+abc+ac+a=a(a^{2}+b^{2}+bc+c+1)=abc\neq0$, which implies that equation (\ref{eq:D3.10}) has no solution.
If $a^{2}c+a^{2}+b^{2}+c^{2}+1\neq0$, we have a solution of (\ref{eq:D3.10}).
Therefore, equation (\ref{eq:D3.10}) has at most one solution. The proof is completed.
\end{proof}

\section{A new class of permutation trinomials over $\mathbb{F}_{2^n}$ with $n\equiv 0({\rm mod}\ 4)$}\label{three}
In the section, one class of permutation trinomials is presented over $\mathbb{F}_{2^{n}}$ when $n\equiv 0({\rm mod}\ 4)$.

\begin{theorem}\label{th:d-c 1}
Let $k$ be an odd integer and let $m$, $n$, $d$ be positive integers satisfying $n=4m$, $1\leq k\leq n-1$, $\gcd(m, k)=1$, 
and $d=\sum_{i=0}^{2m}2^{ik}$. Then
\begin{eqnarray*}
f(x)=x^{d} + x^{2^{2m}} + x
\end{eqnarray*}
is a permutation polynomial on $\mathbb{F}_{2^n}$.
\end{theorem}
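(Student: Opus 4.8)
The plan is to recognize $f$ as a polynomial of the shape $x^{r}h\!\left(x^{2^{2m}-1}\right)$ and to apply the standard ``unit--circle'' criterion. Write $q=2^{2m}$, so that $\mathbb{F}_{2^{n}}=\mathbb{F}_{q^{2}}$ and the map $x\mapsto x^{q}=x^{2^{2m}}$ is the involution generating $\mathrm{Gal}(\mathbb{F}_{q^{2}}/\mathbb{F}_{q})$. First I would record the arithmetic of the exponent. Since $k$ is odd and $\gcd(k,m)=1$ we have $\gcd(k,2m)=1$, and reducing the geometric sum $d=\sum_{i=0}^{2m}2^{ik}$ modulo $q-1=2^{2m}-1$ (where $2^{2m}\equiv1$) the terms $2^{ik}$ for $0\le i\le 2m-1$ run through $2^{0},\dots,2^{2m-1}$ in some order and sum to $2^{2m}-1\equiv0$, while the last term contributes $2^{2mk}\equiv1$. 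Hence $d\equiv1\pmod{q-1}$, so $d-1=(q-1)t$ for a positive integer $t$, and
\[
f(x)=x\bigl(x^{d-1}+x^{q-1}+1\bigr)=x\,h\!\left(x^{q-1}\right),\qquad h(X)=X^{t}+X+1 .
\]

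Next I would invoke the well-known reduction for polynomials $x^{r}h(x^{q-1})$ over $\mathbb{F}_{q^{2}}$ (see, e.g., \cite{M2013,DYDM2016}): with $r=1$ the condition $\gcd(r,q-1)=1$ is automatic, so $f$ permutes $\mathbb{F}_{q^{2}}$ if and only if the induced map $\bar g(w)=w\,h(w)^{\,q-1}$ permutes the group $U=\mu_{q+1}$ of $(q+1)$-st roots of unity. On $U$ one has $w^{q}=w^{-1}$, whence $h(w)^{q}=w^{-t}+w^{-1}+1$ and $\bar g(w)=w\,(w^{-t}+w^{-1}+1)/(w^{t}+w+1)$, a rational function of $w$. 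Establishing that $\bar g$ is a bijection of $U$ would simultaneously dispose of the case $a=0$: indeed $f(x)=0$ with $x\neq0$ forces $h(w)=w^{t}+w+1=0$ for $w=x^{q-1}\in U$, which is precisely the statement that $\bar g$ fails to be $U$-valued, so the first sub-step is to prove $h$ has no root on $U$.

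For the bijectivity I would fix $\lambda\in U$ and solve $\bar g(w)=\lambda$. Clearing denominators gives $\lambda(w^{t}+w+1)=w(w^{-t}+w^{-1}+1)$, and multiplying by $w^{t}$ and setting $v=w^{t}$ yields
\[
\lambda v^{2}+(\lambda+1)(w+1)\,v+w=0,
\]
a quadratic in $v$ with coefficients linear in $w$. Treating this together with the constraint $v=w^{t}$, and using $w^{q+1}=1$ to keep all exponents inside $U$, one reduces to a low-degree equation in $w$; the claim is that it has exactly one root in $U$, which then recovers the unique preimage $x=a/(w^{t}+w+1)$ of the prescribed value (here $\lambda=a^{q-1}$). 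This is the same endgame used for the earlier theorems, where a cubic or quartic in an auxiliary ratio is shown to have a unique admissible solution by a $\gcd$ computation.

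The hard part is precisely this analysis of $\bar g$ on $U$. Because $t=(d-1)/(q-1)$ is large and its residue modulo $q+1$ depends genuinely on $k$, one cannot expand $w^{t}$ directly; the argument must exploit $w^{q+1}=1$ to collapse the $w^{t}$-contributions and then use the hypotheses $k$ odd and $\gcd(m,k)=1$ through gcd identities of the type $\gcd(2^{j}\pm\cdots,\,2^{n}-1)=2^{\gcd(\cdots)}-1=1$ (as in the proofs of Theorems \ref{th:K2} and \ref{th:N2}) to eliminate every spurious root and force uniqueness. Showing that these coprimality conditions are exactly what guarantees both that $h$ is root-free on $U$ and that $\bar g$ is one-to-one is where the real work lies.
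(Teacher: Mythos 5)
Your setup is sound as far as it goes: the congruence $d\equiv 1\pmod{2^{2m}-1}$ is correct (it is equivalent to the paper's identity $d=\frac{2^{(2m+1)k}-1}{2^{k}-1}$ together with $\gcd(2^{k}-1,2^{4m}-1)=1$), and the reduction of $x\,h(x^{q-1})$ to the induced map on the $(q+1)$-th roots of unity is a legitimate standard criterion. But the argument stops exactly where the theorem lives. You never prove that $h(w)=w^{t}+w+1$ has no zero on the unit circle, and you never prove that $\bar g$ is injective there; both are announced as ``claims,'' and you concede that handling $w^{t}$ is ``where the real work lies.'' Since the residue of $t=(d-1)/(q-1)$ modulo $q+1$ is never computed, the quadratic $\lambda v^{2}+(\lambda+1)(w+1)v+w=0$ in $v=w^{t}$ cannot be closed into an equation in $w$ alone by anything you have written. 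As it stands this is an outline that restates the difficulty rather than a proof.

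The device you are missing is the paper's exponent manipulation, which sidesteps any need to know $t$ modulo $q+1$. Write $f(x)=a$ as $x^{d}=x^{2^{2m}}+x+a$ and raise both sides to the power $2^{k}-1$, so the left side becomes $x^{2^{(2m+1)k}-1}$. Because $k$ is odd, $(2m+1)k\equiv 2m+k\pmod{4m}$, so with $x^{2^{2m}}=\theta x$ one gets $x^{2^{(2m+1)k}-1}=\theta^{2^{k}}x^{2^{k}-1}$; since $\gcd(2^{k}-1,2^{4m}-1)=1$, the $(2^{k}-1)$-th powers cancel and $x=a/(1+\theta+\theta^{2^{k}/(2^{k}-1)})$ explicitly. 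Substituting this back into the constraint $x^{2^{2m}}=\theta x$, putting $\beta=\theta^{1/(2^{k}-1)}$ (so $\beta^{2^{2m}+1}=1$) and $\beta=\tau+w$ with $w$ a primitive cube root of unity (again using $k$ odd, so $2^{k}\equiv 2\pmod 3$), the whole condition collapses to the affine equation $(wa^{2^{2m}}+a)z^{2^{k}}+(w^{2}a^{2^{2m}}+a)z+a^{2^{2m}}=0$ in $z=\tau^{-1}$. This has at most two roots, differing by a $(2^{k}-1)$-th root of a fixed element, and one of them is $z=1$, which is excluded because it forces $\beta=w^{2}$ and hence $\beta^{2^{2m}+1}=w\neq 1$. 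That computation is the actual content of the theorem; without it, or an equivalent determination of how $w^{t}$ acts on the unit circle, your proposal cannot be completed.
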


\begin{proof} For any fixed $a \in\mathbb{F}_{2^n}$, we need to prove that equation
\begin{eqnarray}\label{eq:d-c1.1}
f(x)=x^{d} + x^{2^{2m}} + x =a
\end{eqnarray}
has at most one solution in $\mathbb{F}_{2^n}$. Obviously, there is $f(x)=0$ when $x=0$. On the contrary, assume that $x\neq0$ is a solution of the equation
\begin{eqnarray}\label{eq:d-c1.2}
x^{d} + x^{2^{2m}} + x =0.
\end{eqnarray}
Raising $2^{2m}$-th power in both sides of (\ref{eq:d-c1.2}),  we get
\begin{eqnarray}\label{eq:d-c1.3}
x^{2^{2m}\cdot d} + x^{2^{2m}} + x =0.
\end{eqnarray}
Adding (\ref{eq:d-c1.2}) and (\ref{eq:d-c1.3}) gives
\begin{eqnarray}\label{eq:d-c1.4}
x^{2^{2m}\cdot d} + x^{d} =x^{d}(x^{(2^{2m}-1)\cdot d} + 1)=0.
\end{eqnarray}
It leads to $x=0$ or $x^{(2^{2m}-1)\cdot d}=1$. Since $\gcd(d, 2^{n}-1)=1$, this yields $x^{2^{2m}}=x$.
Plugging $x^{2^{2m}}=x$ into equation (\ref{eq:d-c1.2}), we have $x^{d}=0$, thereby $x=0$, which is impossible.
So $f(x)=0$ only has a solution $x=0$.

Since $k$ is an odd integer with $\gcd(m, k)=1$, it can be easily check that
$$d=\sum_{i=0}^{2m}2^{ik}=\frac{2^{(2m+1)k}-1}{2^{k}-1},$$
$\gcd(d, 2^{n}-1)=1$, and $\gcd(2^{k}-1, 2^{4m}-1)=1$.

In the sequel, assume that $x, a \in\mathbb{F}^{*}_{2^n}$. We need to show that equation
\begin{eqnarray}\label{eq:d-c1.5}
x^{d} + x^{2^{2m}} + x =a
\end{eqnarray}
has at most one solution. The above equation can be written as
\begin{eqnarray}\label{eq:d-c1.6}
x^{\frac{2^{(2m+1)k}-1}{2^{k}-1}} = x^{2^{2m}} + x+ a.
\end{eqnarray}
Raising $(2^{k}-1)$-th power in both sides of equation (\ref{eq:d-c1.6}), we derive that
\begin{eqnarray}\label{eq:d-c1.7}
x^{2^{(2m+1)k}-1} = (x^{2^{2m}} + x+ a)^{2^{k}-1}.
\end{eqnarray}
Let $x^{2^{2m}}=\theta x$, $\theta\neq0$, then $\theta^{2^{2m}+1}=1$ and
$$x^{2^{(2m+1)k}-1}=x^{2^{2m+k}-1}=(\theta x)^{2^{k}}x^{-1}=\theta^{2^{k}}x^{2^{k}-1}.$$
From equation (\ref{eq:d-c1.7}), we obtain
\begin{eqnarray}\label{eq:d-c1.8}
(\theta^{\frac{2^{k}}{2^{k}-1}}x)^{2^{k}-1} = ((1+\theta)x+ a)^{2^{k}-1}.
\end{eqnarray}
Since $\gcd(2^{k}-1, 2^{4m}-1)=1$, we deduce that $\theta^{\frac{2^{k}}{2^{k}-1}}x=(1+\theta)x+ a$, which can be written as $(1+\theta+\theta^{\frac{2^{k}}{2^{k}-1}})x=a$.
Because of $a\neq0$, we have
\begin{eqnarray}\label{eq:d-c1.9}
x=\frac{a}{1+\theta+\theta^{\frac{2^{k}}{2^{k}-1}}}.
\end{eqnarray}
Plugging (\ref{eq:d-c1.9}) into $x^{2^{2m}}=\theta x$, we get
$$\frac{a^{2^{2m}}}{1+\theta^{-1}+\theta^{-\frac{2^{k}}{2^{k}-1}}}=\theta \frac{a}{1+\theta+\theta^{\frac{2^{k}}{2^{k}-1}}},$$
which can be simplified as
$$a^{2^{2m}}(1+\theta+\theta^{\frac{2^{k}}{2^{k}-1}}) = a(1+\theta+\theta^{-\frac{1}{2^{k}-1}}).$$
Let $\theta^{\frac{1}{2^{k}-1}}=\beta$, we have $\beta^{2^{2m}+1}=1$. The above equation can be reduced as
$$a^{2^{2m}}(1+\beta^{2^{k}-1}+\beta^{2^{k}}) = a(1+\beta^{2^{k}-1}+\beta^{-1}).$$
It leads to
$$a^{2^{2m}}(\beta+\beta^{2^{k}}+\beta^{2^{k}+1}) = a(\beta+\beta^{2^{k}}+1),$$
which is equivalent to
\begin{eqnarray}\label{eq:d-c2.0}
a^{2^{2m}}\beta^{2^{k}+1} + (a+a^{2^{2m}})(\beta+\beta^{2^{k}})+a=0.
\end{eqnarray}
Set $\beta=t+w$, where $w^{3}=1$ and $w\neq1$. Since $k$ is odd, this implies $2^{k}\equiv 2({\rm mod}\ 3)$.
Thus equation (\ref{eq:d-c2.0}) can be written as
$$ a^{2^{2m}}(t+w)^{2^{k}+1} + (a+a^{2^{2m}})(t^{2^{k}}+t+w^{2^{k}}+w)+a=0,$$
which implies that
\begin{eqnarray}\label{eq:d-c2.1}
a^{2^{2m}}t^{2^{k}+1} + (w^{2}a^{2^{2m}}+a)t^{2^{k}}+(wa^{2^{2m}}+a)t=0.
\end{eqnarray}
If $t=0$, then $\beta=w$, and $\beta^{2^{2m}+1}=w^{2}\neq 1$, which is a contradiction.
Dividing equation (\ref{eq:d-c2.1}) by $t^{2^{k}+1}$, we obtain
\begin{eqnarray}\label{eq:d-c2.2}
(wa^{2^{2m}}+a)t^{-2^{k}} + (w^{2}a^{2^{2m}}+a)t^{-1}+a^{2^{2m}}=0.
\end{eqnarray}
Let $z=t^{-1}$, then $z\neq0$, equation (\ref{eq:d-c2.2}) becomes
\begin{eqnarray}\label{eq:d-c2.3}
(wa^{2^{2m}}+a)z^{2^{k}} + (w^{2}a^{2^{2m}}+a)z+a^{2^{2m}}=0.
\end{eqnarray}

If $wa^{2^{2m}}+a=0$ and $w^{2}a^{2^{2m}}+a=0$, then $a=0$. This is a contradiction.

If $wa^{2^{2m}}+a=0$ or $w^{2}a^{2^{2m}}+a=0$, then equation (\ref{eq:d-c2.3}) has one solution $z$.
Therefore equation (\ref{eq:d-c1.9}) only has one solution.

If $wa^{2^{2m}}+a\neq0$ and $w^{2}a^{2^{2m}}+a\neq 0$, assume that $z_{1}$, $z_{2}$ are two solutions of equation (\ref{eq:d-c2.3}),
then
\begin{eqnarray}\label{eq:d-c2.4}
(wa^{2^{2m}}+a)(z_{1}-z_{2})^{2^{k}} + (w^{2}a^{2^{2m}}+a)(z_{1}-z_{2}) =0.
\end{eqnarray}
Since $\gcd(2^{k}-1, 2^{4m}-1)=1$, we obtain $z_{1}=z_{2}$ or $z_{1}=z_{2}+\big(\frac{w^{2}a^{2^{2m}}+a}{wa^{2^{2m}}+a} \big)^{\frac{1}{2^{k}-1}}$ from (\ref{eq:d-c2.4}).
Therefore, equation (\ref{eq:d-c2.3}) has at most two solutions. It is not difficult to check that $z=1$ is a solution of equation (\ref{eq:d-c2.3}).
Then $t=1$, $\beta=t+w=w^{2}$, which implies $\beta^{2^{2m}+1}=w\neq 1$. It contradicts the known result $\beta^{2^{2m}+1}=1$. So equation (\ref{eq:d-c2.3}) only has one solution $z$
such that equation (\ref{eq:d-c1.9}) has at most one solution.
This completes the proof.

\end{proof}

\section{Conclusion}\label{conclusion}
In this paper, we propose six new classes of permutation trinomials with explicit forms over
$\mathbb{F}_{2^n}$ by determining the solutions of some equations. The multivariate method
introduced by Dobbertin is a powerful tool in this work.
We expect that the new permutation trinomials can be applied to design new linear codes or cryptography.

\end{document}